\title{On the asymptotic distribution of the mean absolute deviation about the mean}
\author{%
  Johan Segers\\
  \small Universit\'e catholique de Louvain\\
  \small Institut de Statistique, Biostatistique et Sciences Actuarielles\\
  \small Voie du Roman Pays 20, B-1348 Louvain-la-Neuve, Belgium\\
  \small \texttt{johan.segers@uclouvain.be}%
}
\date{June 16, 2014}
\newtheorem{proposition}{Proposition}[section]
\newtheorem{lemma}[proposition]{Lemma}
\newtheorem{corollary}[proposition]{Corollary}
\theoremstyle{remark}
\newtheorem{remark}[proposition]{Remark}
\numberwithin{equation}{section}
\newcommand{\diff}{\mathrm{d}}
\renewcommand{\Pr}{\mathbb{P}}
\newcommand{\E}{\mathbb{E}}
\newcommand{\expec}{\E}
\newcommand{\cov}{\operatorname{cov}}
\newcommand{\var}{\operatorname{var}}
\newcommand{\sign}{\operatorname{sign}}
\newcommand{\reals}{\mathds{R}}
\newcommand{\RR}{\reals}
\newcommand{\ZZ}{\mathds{Z}}
\newcommand{\ind}{\mathds{1}}
\newcommand{\1}{\ind}
\newcommand{\dto}{\rightsquigarrow}
\renewcommand{\le}{\leqslant}
\renewcommand{\ge}{\geqslant}
\newcommand{\floor}[1]{\left\lfloor{#1}\right\rfloor}
\newcommand{\MAD}{\theta}
\newcommand{\MADn}{\hat{\MAD}_n}
\newcommand{\MADm}{\tilde{\MAD}_n}
\newcommand{\abs}[1]{\left\lvert{#1}\right\rvert}
\begin{document}

\maketitle

\begin{abstract}
The mean absolute deviation about the mean is an alternative to the standard deviation for measuring dispersion in a sample or in a population.
For stationary, ergodic time series with a finite first moment, an asymptotic expansion for the sample mean absolute deviation is proposed. The expansion yields the asymptotic distribution of the sample mean absolute deviation under a wide range of settings, allowing for serial dependence or an infinite second moment.
\par\medskip
\noindent\emph{Key words:} central limit theorem; dispersion; ergodicity; regular variation; stable distribution; strong mixing.
\end{abstract}

\section{Introduction}
\label{sec:intro}

The mean absolute deviation of a sample $X_1, \ldots, X_n$ about the sample mean $\bar{X}_n = n^{-1} \sum_{i=1}^n X_i$ is given by the statistic
\begin{equation}
\label{eq:MADn}
  \MADn = \frac{1}{n} \sum_{i=1}^n \abs{X_i - \bar{X}_n}.
\end{equation}
If the random variables $X_i$ have common distribution function $F$ with finite mean $\mu = \int_\reals x \, \diff F(x)$, then $\MADn$ is an estimator of the mean absolute deviation
\begin{equation}
\label{eq:MAD}
  \MAD = \expec[ \abs{X_1 - \mu} ] = \int_{\RR} \abs{x - \mu} \, \diff F(x).
\end{equation}


The (sample) mean absolute deviation is an alternative to the standard deviation for measuring dispersion. Its advantages and drawbacks have been widely discussed in the literature. The standard deviation is motivated mainly from optimality results in the context of independent random sampling from the normal distribution, an analysis dating back to Fisher and even Laplace \citep{stigler:1973}. However, the mean absolute deviation may be more appropriate in case of departures from normality or in the presence of outliers \citep{tukey:1960, huber:1996}. It may also offer certain pedagogical advantages. For extensive discussions and comparisons, see \cite{pham-gia:hung:2001} and \cite{gorard:2005} and the references therein.

Because of the presence of the absolute value function, finding the asymptotic distribution of the sample mean absolute deviation is surprisingly challenging. In \citet[Section~2]{pollard:1989} and \citet[Example~19.25]{vdvaart:1998}, the exercise is put forward as a showcase for the power of empirical process theory. In a nutshell, their analysis is as follows. Let
\[
  \MADm = \frac{1}{n} \sum_{i=1}^n \abs{X_i - \mu}
\]
be the version of the sample mean absolute deviation that could be computed if the true mean, $\mu$, were known. Consider the dispersion function
\[
  D_F(u) = \int_{\RR} \abs{x - u} \, \diff F(x), \qquad u \in \reals.
\]
Clearly, $\MAD = D_F(\mu)$. Under independent random sampling and under the presence of finite second moments, asymptotic uniform equicontinuity of the empirical process $u \mapsto n^{-1/2} \sum_{i=1}^n \bigl( \abs{X_i - u} - D_F(u) \bigr)$ implies that
\begin{align}
\nonumber
  \sqrt{n} \bigl( \MADn - \MAD \bigr)
  &= \sqrt{n} \bigl( \MADn - D_F(\bar{X}_n) \bigr) + \sqrt{n} \bigl( D_F( \bar{X}_n ) - \MAD \bigr) \\
\label{eq:MADn:vdV}
  &= \sqrt{n} \bigl( \MADm - \MAD \bigr) + \sqrt{n} \bigl( D_F( \bar{X}_n ) - \MAD \bigr)
  + o_p(1), \qquad n \to \infty.
\end{align}
If $F$ is continuous at $u$, then $D_F$ is differentiable at $u$ with derivative $D_F'(u) = 2 F(u) - 1$ \citep[Theorem~1]{munoz:sanchez:1990}. By the delta method, it follows from \eqref{eq:MADn:vdV} that, if $F$ is continuous at $\mu$,
\begin{align}
\label{eq:MAD:empproc}
  \sqrt{n} \bigl( \MADn - \MAD \bigr)
  &= \sqrt{n} \bigl( \MADm - \MAD \bigr) 
  + \bigl(2 F(\mu) - 1 \bigr) \, \sqrt{n} ( \bar{X}_n - \mu ) + o_p(1) \\
\label{eq:MAD:asnorm}
  &\dto N(0, \sigma_{\MAD}^2), \qquad n \to \infty,
\end{align}
the arrow `$\dto$' denoting weak convergence. The asymptotic variance equals
\[
  \sigma_{\MAD}^2 = \var \bigl( \abs{X_1 - \mu} + (2 F(\mu) - 1) \, X_1 \bigr).
\]

The above proof is elegant and short. However, it rests on a body of advanced empirical process theory. Using direct arguments, \cite{babu:rao:1992} establish higher-order expansions for $\sqrt{n} \bigl( \MADn - \MAD \bigr)$ under the additional assumption that $F$ is H\"older continuous or even differentiable at $\mu$.

The results described so far are limited to independent random sampling from a distribution $F$ with a finite second moment, and, except for \eqref{eq:MADn:vdV}, without an atom at its mean $\mu$. In the literature, no results seem to be available on the asymptotic distribution of the sample mean absolute deviation in the case of serial dependence or when the second moment does not exist. Even in the case of independent random sampling from a distribution with finite second moment, the asymptotic distribution of \eqref{eq:MADn:vdV} in case $F$ has an atom at $\mu$ seems not to have been described yet.

The aim of this paper is to derive an asymptotic expansion of $\MADn - \MAD$ from first principles and under minimal assumptions (Section~\ref{sec:expansion}). The expansion yields the asymptotic distribution of the sample mean absolute deviation under a wide range of settings, including serial dependence, in the infinite-variance case, and without smoothness assumptions (Section~\ref{sec:dto}). Even in the case of independent random sampling from a distribution with finite second moment, the asymptotic distribution of $\sqrt{n} (\MADn - \MAD)$ is found to be non-Gaussian in case $F$ possesses an atom at $\mu$.



\section{Asymptotic expansion}
\label{sec:expansion}

Stationarity of the time series $(X_i)_{i \ge 1}$ means that for all positive integers $k \ge 1$ and $h \ge 0$, the distribution of the random vector $(X_{1+h}, \ldots, X_{k+h})$ does not depend on $h$. By the Birkhoff ergodic theorem \citep[Theorem~10.6]{kallenberg:2002}, stationarity and ergodicity imply that, for every Borel measurable function $f : \reals \to \reals$ such that $\expec[\abs{f(X_1)}] < \infty$, we have
\begin{equation}
\label{eq:ergodic}
  \frac{1}{n} \sum_{i=1}^n f(X_i) \to \expec[ f(X_1) ], \qquad n \to \infty, \text{ almost surely.}
\end{equation}

Assume that the stationary distribution $F$ has finite mean, $\mu$, and recall the mean absolute deviation $\theta$ in \eqref{eq:MAD} and its sample version $\hat{\theta}_n$ in \eqref{eq:MADn}. Write
\begin{equation}
\label{eq:MAD:easy}
  \hat{\MAD}_n - \MAD = \frac{1}{n} \sum_{i=1}^n \bigl( |X_i - \bar{X}_n| - |X_i - \mu| \bigr)
  + \frac{1}{n} \sum_{i=1}^n (\abs{X_i - \mu} - \MAD)
\end{equation}
The second term on the right-hand side is just a sum of centered random variables. It is the first term which poses a challenge.



\begin{lemma}
\label{lem:infl}
Let $X_1, X_2, \ldots$ be a stationary, ergodic time series with finite mean $\mu = \expec[X_1]$. We have, as $n \to \infty$, almost surely,
\begin{align*}
  \frac{1}{n} \sum_{i = 1}^n \bigl( |X_i - \bar{X}_n| - |X_i - \mu| \bigr)
  &= (\bar{X}_n - \mu) \, \bigl( \Pr[X_1 < \mu] - \Pr[X_1 > \mu] \bigr) \\
  & \qquad \mbox{}
  + \abs{\bar{X}_n - \mu} \, \Pr[ X_1 = \mu ] + o \bigl( \abs{\bar{X}_n - \mu} \bigr).
\end{align*}
\end{lemma}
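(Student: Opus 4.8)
The plan is to center at $\mu$ and work with the deviations $Y_i = X_i - \mu$ and $\delta_n = \bar{X}_n - \mu$. By the ergodic theorem \eqref{eq:ergodic} applied to $f(x) = x$, we have $\delta_n \to 0$ almost surely, so the claim amounts to an expansion of $\frac1n\sum_{i=1}^n\bigl(|Y_i - \delta_n| - |Y_i|\bigr)$ in the vanishing random quantity $\delta_n$. First I would record the exact pointwise identity for $g(y,\delta) := |y - \delta| - |y|$. Checking cases ($y<0$, $0\le y<\delta$, $y\ge\delta$) gives, for $\delta > 0$,
\[
  g(y,\delta) = -\delta\,\1\{y > 0\} + \delta\,\1\{y \le 0\} + 2\max(\delta - y,\,0)\,\1\{y > 0\},
\]
and symmetrically, for $\delta < 0$,
\[
  g(y,\delta) = -\delta\,\1\{y \ge 0\} + \delta\,\1\{y < 0\} + 2\max(y - \delta,\,0)\,\1\{y < 0\}.
\]
In each case the first two terms form a ``sign'' part that is linear in $\delta$, while the last term is a nonnegative ``corner'' correction supported on the short interval between $0$ and $\delta$.

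Next I would sum the sign part over $i$, divide by $n$, and apply \eqref{eq:ergodic} to the bounded functions $\1\{Y_i > 0\}$ and $\1\{Y_i \le 0\}$ (respectively $\1\{Y_i \ge 0\}$ and $\1\{Y_i < 0\}$). Since each empirical frequency converges almost surely to the corresponding probability while $\delta_n \to 0$, the sign part equals $\delta_n\bigl(\Pr[X_1 \le \mu] - \Pr[X_1 > \mu]\bigr) + o(\delta_n)$ when $\delta_n > 0$, and $\delta_n\bigl(\Pr[X_1 < \mu] - \Pr[X_1 \ge \mu]\bigr) + o(\delta_n)$ when $\delta_n < 0$. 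Splitting off the atom through $\Pr[X_1 \le \mu] = \Pr[X_1 < \mu] + \Pr[X_1 = \mu]$ (resp.\ $\Pr[X_1 \ge \mu] = \Pr[X_1 > \mu] + \Pr[X_1 = \mu]$) and using $\delta_n = |\delta_n|$ resp.\ $-\delta_n = |\delta_n|$, both cases collapse to the stated leading term $\delta_n\bigl(\Pr[X_1 < \mu] - \Pr[X_1 > \mu]\bigr) + |\delta_n|\,\Pr[X_1 = \mu]$.

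The remaining work, and the main obstacle, is to show the corner correction is $o(\delta_n)$. Bounding $\max(\delta_n - Y_i, 0)\,\1\{Y_i > 0\} \le |\delta_n|\,\1\{0 < Y_i < \delta_n\}$ (resp.\ $\max(Y_i - \delta_n, 0)\,\1\{Y_i < 0\} \le |\delta_n|\,\1\{\delta_n < Y_i < 0\}$), it suffices to prove that the empirical measure of the shrinking interval between $0$ and $\delta_n$ tends to zero almost surely. This is delicate because the endpoint $\delta_n$ is random and built from the very same sample, so the ergodic theorem does not apply directly. The device is that $\delta_n$ does not depend on $i$: fixing any $\eps > 0$, almost surely $|\delta_n| < \eps$ for all large $n$, whence $\1\{0 < Y_i < \delta_n\} \le \1\{0 < Y_i < \eps\}$ holds simultaneously for every $i$. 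The ergodic theorem applied to $\1\{0 < Y_i < \eps\}$ then yields
\[
  \limsup_{n \to \infty} \frac1n\sum_{i=1}^n \1\{0 < Y_i < \delta_n\} \le \Pr[\mu < X_1 < \mu + \eps], \qquad \text{almost surely.}
\]

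Letting $\eps \downarrow 0$ along a countable sequence, and intersecting the corresponding full-measure events, forces this $\limsup$ to zero by continuity of the probability measure at $\emptyset$; the analogous bound handles the case $\delta_n < 0$. Hence the correction is at most $2|\delta_n|\cdot o(1) = o(\delta_n)$, which combines with the sign part to give the asserted expansion.
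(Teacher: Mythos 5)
Your argument is correct and follows essentially the same route as the paper's proof: an exact pointwise identity whose ``corner'' remainder is supported on the interval between $\mu$ and $\bar{X}_n$, the ergodic theorem applied to the sign/indicator functions for the linear part, and the deterministic-$\eps$ envelope (then $\eps \downarrow 0$) to show the empirical mass of that shrinking interval is $o(1)$ almost surely. The only cosmetic difference is that you split into the cases $\bar{X}_n - \mu > 0$ and $\bar{X}_n - \mu < 0$, whereas the paper packages both into a single identity via $\sign(\mu - x)$ and $\1_{\{\mu\}}(x)$.
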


\begin{proof}
Consider the random variables
\begin{align*}
  A_n &= \min( \bar{X}_n, \mu ), & B_n &= \max( \bar{X}_n, \mu ).
\end{align*}
Let $\sign(z)$ be equal to $1$, $0$, or $-1$ according to whether $z$ is larger than, equal to, or smaller than zero, respectively. For $x \in \RR \setminus (A_n, B_n)$, that is, for $x$ not between $\mu$ and $\bar{X}_n$, a case-by-case analysis reveals that
\[
  \abs{x - \bar{X}_n} - \abs{x - \mu}
  = ( \bar{X}_n - \mu ) \, \sign( \mu - x ) + \abs{\bar{X}_n - \mu} \, \1_{\{\mu\}}(x).
\]
Consider the partition $\{1, \ldots, n\} = \mathcal{K}_n \cup \mathcal{L}_n$, where
\begin{align*}
  \mathcal{K}_n &= \{ i = 1, \ldots, n : A_n < X_i < B_n \}, \\
  \mathcal{L}_n &= \{1, \ldots, n\} \setminus \mathcal{K}_n.
\end{align*}
By convention, the sum over the empty set is zero. We find that
\begin{align*}
  \lefteqn{
  \sum_{i = 1}^n \bigl( \abs{X_i - \bar{X}_n} - \abs{X_i - \mu} \bigr)
  } \\
  &= (\bar{X}_n - \mu) \, \sum_{i \in \mathcal{L}_n } \sign( \mu - X_i ) + \abs{\bar{X}_n - \mu} \, \sum_{i \in \mathcal{L}_n } \1_{\{\mu\}}(X_i) \\
  &\quad\mbox{} + \sum_{i \in \mathcal{K}_n} \bigl( \abs{X_i - \bar{X}_n} - \abs{X_i - \mu} \bigr) \\
  &= (\bar{X}_n - \mu) \, \sum_{i = 1}^n \sign( \mu - X_i ) + \abs{ \bar{X}_n - \mu } \, \sum_{i = 1}^n \1_{\{\mu\}}(X_i) + R_n
\end{align*}
with
\begin{multline*}
  R_n 
  = \sum_{i \in \mathcal{K}_n} \bigl( \abs{X_i - \bar{X}_n} - \abs{X_i - \mu} \bigr) \\
  - (\bar{X}_n - \mu) \, \sum_{i \in \mathcal{K}_n } \sign( \mu - X_i ) 
  - \abs{ \bar{X}_n - \mu } \, \sum_{i \in \mathcal{K}_n } \1_{\{\mu\}}(X_i)
\end{multline*}
By \eqref{eq:ergodic}, we have, as $n \to \infty$, almost surely,
\begin{align*}
  \frac{1}{n} \sum_{i = 1}^n \sign( \mu - X_i )
  &\to \expec[ \sign( \mu - X_1 ) ] = \Pr[X_1 < \mu] - \Pr[X_1 > \mu], \\
  \frac{1}{n} \sum_{i = 1}^n \1_{\{\mu\}}(X_i)
  &\to \Pr[ X_1 = \mu ].
\end{align*}
As a consequence, as $n \to \infty$ and almost surely,
\begin{align*}
  \frac{1}{n} \sum_{i = 1}^n \bigl( \abs{X_i - \bar{X}_n} - \abs{X_i - \mu} \bigr) 
  &= ( \bar{X}_n - \mu ) \bigl( \Pr[X_1 < \mu] - \Pr[X_1 > \mu] + o(1) \bigr) \\
  &\qquad \mbox{} + \abs{ \bar{X}_n - \mu } \, \bigl( \Pr[ X_1 = \mu ] + o(1) \bigr) + \frac{R_n}{n}.
\end{align*}
Further, as $\abs{ \abs{x - \bar{X}_n} - \abs{x - \mu} } \le \abs{ \bar{X}_n - \mu }$ for all $x \in \reals$, we obtain
\[
  \abs{R_n} \le 3 \abs{ \bar{X}_n - \mu } \, \abs{ \mathcal{K}_n },
\]
where $\abs{ \mathcal{K}_n }$ is the number of elements of $\mathcal{K}_n$. The lemma therefore follows if we can prove that
\begin{equation}
\label{eq:Kn0}
  \frac{\abs{ \mathcal{K}_n }}{n} \to 0, \qquad n \to \infty, \text{ a.s.}
\end{equation}
For $x \in \reals$, put
\begin{align*}
  \hat{F}_n(x) &= \frac{1}{n} \sum_{i=1}^n \1(X_i \le x), &
  \hat{F}_n(x-) &= \frac{1}{n} \sum_{i=1}^n \1(X_i < x).
\end{align*}
We have
\[
  \frac{\abs{ \mathcal{K}_n }}{n} = \hat{F}_n( B_n- ) - \hat{F}_n( A_n ).
\]
By \eqref{eq:ergodic}, $\bar{X}_n$ and thus $A_n$ and $B_n$ converge to $\mu$ almost surely. Moreover, we have either $A_n = \mu \le B_n$ or $A_n \le \mu = B_n$ (or both, if $\bar{X}_n = \mu$). As a consequence,
\[
  \frac{\abs{ \mathcal{K}_n }}{n}
  \le \max \bigl\{ \hat{F}_n( B_n - ) - \hat{F}_n(\mu), \hat{F}_n( \mu - ) - \hat{F}_n(A_n) \bigr\}.
\]
Fix $\delta > 0$. By monotonocity of $\hat{F}_n$ and by ergodicity \eqref{eq:ergodic}, we have, almost surely,
\begin{align*}
  \limsup_{n \to \infty} \hat{F}_n( B_n - ) 
  &\le \limsup_{n \to \infty} \hat{F}_n( \mu + \delta ) = F(\mu + \delta), \\
  \liminf_{n \to \infty} \hat{F}_n( A_n ) 
  &\ge \liminf_{n \to \infty} \hat{F}_n( \mu - \delta ) = F(\mu - \delta).
\end{align*}
Since moreover $\hat{F}_n(\mu) \to F(\mu)$ and $\hat{F}_n(\mu-) \to F(\mu-) = \Pr[X_1 < \mu]$ almost surely, it follows that
\begin{align*}
  \limsup_{n \to \infty} \bigl( \hat{F}_n( B_n - ) - \hat{F}_n(\mu) \bigr) 
  &\le F(\mu + \delta) - F(\mu), \\
  \limsup_{n \to \infty} \bigl( \hat{F}_n( \mu - ) - \hat{F}_n(A_n) \bigr) 
  &\le F( \mu - ) - F(\mu - \delta),
\end{align*}
almost surely, and therefore
\[
  \limsup_{n \to \infty} \frac{| \mathcal{K}_n |}{n} \le
  \max \bigl( F(\mu + \delta) - F(\mu), \, F( \mu - ) - F(\mu - \delta) \bigr)
\]
almost surely. Since $\delta$ was arbitrary, we obtain \eqref{eq:Kn0}, as required.
\end{proof}

\section{Weak convergence}
\label{sec:dto}

Combining the expansions in \eqref{eq:MAD:easy} and Lemma~\ref{lem:infl}, the limit distribution of $\hat{\MAD}_n$ follows right away. For the sake of illustration, we consider two settings: strongly mixing time series with finite second moments (Subsection~\ref{ss:serial}) and independent random sampling from distributions in the domain of attraction of a stable law with index $\alpha \in (1, 2)$ (Subsection~\ref{ss:stable}).

\subsection{Strongly mixing time series, finite variance}
\label{ss:serial}

Let $(X_i)_{i \in \ZZ}$ be a stationary time series. For $k \in \ZZ$, consider the $\sigma$-fields $\mathcal{F}_k = \sigma(X_i : i \le k)$ and $\mathcal{G}_k = \sigma(X_i : i \ge k)$. Rosenblatt's mixing coefficients are defined by
\[
  \alpha_n = 
  \sup 
  \bigl\{ 
    \abs{ \Pr(A \cap B) - \Pr(A) \, \Pr(B) } : 
    A \in \mathcal{F}_0, \, B \in \mathcal{G}_n
  \bigr\}
\]
for integer $n \ge 0$. The time series $(X_i)_{i \in \ZZ}$ is called strongly mixing if $\alpha_n \to 0$ as $n \to \infty$. Strong mixing implies ergodicity.

Define $\alpha(t) = \alpha_{\floor{t}}$, $t \ge 0$, and its inverse function $\alpha^{-1}(u) = \inf \{ t \ge 0 : \alpha(t) \le u \}$ for $u > 0$. Let $Q$ denote the quantile function of the distribution of $X_i$. 

\begin{proposition}
\label{prop:serial}
Let $(X_i)_{i \in \ZZ}$ be a strongly mixing, stationary time series with finite second moments. If
\begin{equation}
\label{eq:alphaQ}
  \int_0^1 \alpha^{-1}(u) \, \{Q(u)\}^2 \, \diff u < \infty,
\end{equation}
then, as $n \to \infty$,
\begin{equation}
\label{eq:MAD:limit}
  \sqrt{n} ( \MADn - \MAD ) 
  \dto 
  Y \, \bigl( \Pr[ X_1 < \mu ] - \Pr[ X_1 > \mu ] \bigr) 
  + \abs{Y} \, \Pr[ X_1 = \mu ] 
  + Z,
\end{equation}
where $(Y, Z)$ is bivariate normal with mean zero and covariance matrix given by
\begin{align*}
  \var(Y) 
  &= \var(X_0) + 2 \, \sum_{i=1}^\infty \cov(X_i, X_0), \\
  \var(Z) 
  &= \var(\abs{X_0 - \mu}) + 2 \, \sum_{i=1}^\infty \cov(\abs{X_i - \mu}, \abs{X_0 - \mu}), \\
  \cov(Y, Z) 
  &= \sum_{i \in \ZZ} \cov(\abs{X_i - \mu}, X_0) 
  = \sum_{i \in \ZZ} \cov(X_i, \abs{X_0 - \mu}).
\end{align*}
all series being absolutely convergent.
\end{proposition}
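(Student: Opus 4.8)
The plan is to feed the almost sure expansion of Lemma~\ref{lem:infl} into the decomposition \eqref{eq:MAD:easy}, multiply through by $\sqrt{n}$, and recognise the leading terms as continuous functions of the bivariate partial-sum vector
\[
  (Y_n, Z_n) = \Bigl( \frac{1}{\sqrt{n}} \sum_{i=1}^n (X_i - \mu), \; \frac{1}{\sqrt{n}} \sum_{i=1}^n ( \abs{X_i - \mu} - \MAD ) \Bigr).
\]
Since $\sqrt{n} (\bar{X}_n - \mu) = Y_n$ and $\sqrt{n} \abs{\bar{X}_n - \mu} = \abs{Y_n}$, Lemma~\ref{lem:infl} combined with \eqref{eq:MAD:easy} gives, almost surely,
\[
  \sqrt{n} ( \MADn - \MAD ) = Y_n \, \bigl( \Pr[X_1 < \mu] - \Pr[X_1 > \mu] \bigr) + \abs{Y_n} \, \Pr[X_1 = \mu] + Z_n + \sqrt{n} \, o\bigl( \abs{\bar{X}_n - \mu} \bigr).
\]
The target \eqref{eq:MAD:limit} is then $g(Y,Z)$ for the continuous map $g(y,z) = y \, p + \abs{y} \, q + z$, with $p = \Pr[X_1<\mu] - \Pr[X_1>\mu]$ and $q = \Pr[X_1 = \mu]$, so the proof reduces to (i) showing the remainder is $o_p(1)$ and (ii) proving the joint convergence $(Y_n, Z_n) \dto (Y, Z)$.

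For (i), I would write the remainder as $\abs{Y_n} \, \rho_n$, where $\rho_n$ is the $o(1)$ factor from the lemma (set $\rho_n = 0$ on the event $\bar{X}_n = \mu$) and hence tends to $0$ almost surely. Once (ii) is in place, $\abs{Y_n}$ converges in distribution and is therefore bounded in probability, so the product $\abs{Y_n} \, \rho_n$ is $o_p(1)$.

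For (ii), I would invoke the Cram\'er--Wold device: fix $a, b \in \reals$ and set $W_i = a (X_i - \mu) + b ( \abs{X_i - \mu} - \MAD )$, a centred, instantaneous function of $X_i$. As a function of the original series, $(W_i)$ is stationary and strongly mixing with mixing coefficients bounded by $\alpha_n$; and since $x \mapsto a x + b \abs{x - \mu}$ is Lipschitz, giving $\abs{W_i} \le C ( 1 + \abs{X_i} )$, the quantile--mixing condition \eqref{eq:alphaQ} is inherited by $(W_i)$. This places us within the central limit theorem for strongly mixing sequences under precisely the integrability condition \eqref{eq:alphaQ} (Rio), which yields both the absolute convergence of the covariance series and $n^{-1/2} \sum_{i=1}^n W_i \dto N(0, \sigma_W^2)$ with $\sigma_W^2 = \var(W_0) + 2 \sum_{i=1}^\infty \cov(W_i, W_0)$. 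A short bilinear expansion, using stationarity to fold the two one-sided cross-covariance sums into the two-sided sum $\sum_{i \in \ZZ} \cov(\abs{X_i - \mu}, X_0)$, shows $\sigma_W^2 = a^2 \var(Y) + 2 a b \cov(Y,Z) + b^2 \var(Z) = \var(aY + bZ)$ for the covariance matrix stated in the proposition. As this holds for every $(a,b)$, the vector $(Y_n, Z_n)$ converges to the claimed bivariate normal law.

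Combining (i) and (ii), the continuous mapping theorem gives $g(Y_n, Z_n) \dto g(Y,Z)$, and Slutsky's lemma absorbs the $o_p(1)$ remainder, establishing \eqref{eq:MAD:limit}. I expect the main obstacle to be step (ii): one must match the hypotheses of the correct strong-mixing CLT to \eqref{eq:alphaQ} exactly, and verify that this quantile--mixing condition is stable under the Lipschitz transformation defining $W_i$ so that it applies to every linear combination. The finite-variance assumption alone does not guarantee a CLT under mere strong mixing, which is precisely why the sharper condition \eqref{eq:alphaQ} is imposed and why it must be shown to transfer.
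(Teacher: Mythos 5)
Your proposal is correct and follows essentially the same route as the paper: the same decomposition \eqref{eq:MAD:easy} plus Lemma~\ref{lem:infl}, the joint CLT for $(Y_n,Z_n)$ under the strong-mixing condition \eqref{eq:alphaQ} (the paper cites Theorem~1 of Doukhan--Massart--Rio and the remark following it, which is exactly the Cram\'er--Wold extension you reconstruct), and then the continuous mapping theorem with Slutsky's lemma. You merely spell out two steps the paper leaves implicit --- the $o_p(1)$ treatment of the remainder $\abs{Y_n}\rho_n$ and the transfer of \eqref{eq:alphaQ} to the linear combinations $W_i$ --- both of which are handled correctly.
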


\begin{proof}
By Theorem~1 in \cite{doukhan:massart:rio:1994} and the remark just after that theorem, we have, as $n \to \infty$,
\begin{equation}
\label{eq:YnZn}
  (Y_n, Z_n) :=
  \left( 
    \frac{1}{\sqrt{n}} \sum_{i=1}^n (X_i - \mu), \,
    \frac{1}{\sqrt{n}} \sum_{i=1}^n \bigl( \abs{X_i - \mu} - \theta \bigr)
  \right)
  \dto (Y, Z),
\end{equation}
with $(Y, Z)$ as in the statement of the proposition. Combining \eqref{eq:MAD:easy} and Lemma~\ref{lem:infl} yields the expansion
\begin{multline}
\label{eq:MAD:YnZn}
  \sqrt{n} ( \MADn - \MAD ) \\
  = Y_n \, \bigl( \Pr[X_1 < \mu] - \Pr[X_1 > \mu] \bigr)
  + \abs{Y_n} \, \Pr[ X_1 = \mu ] + o( \abs{Y_n} ) + Z_n,
\end{multline}
as $n \to \infty$, almost surely. Apply the continuous mapping theorem and Slutsky's lemma to arrive at the result.
\end{proof}

Condition~\ref{eq:alphaQ} covers many cases and is almost sharp; see the applications on pages~67--68 in \cite{doukhan:massart:rio:1994} as well as their Theorem~2. The case of independent and identically distributed variables is trivially included.

\begin{corollary}
\label{cor:CLT}
Let $X_1, X_2, \ldots$ be independent and identically distributed random variables with common distribution $F$. If $F$ has a finite second moment, then weak convergence \eqref{eq:MAD:limit} holds, where $(Y, Z)$ is bivariate normal with mean zero and covariance matrix equal to the one of $( X_1 - \mu, \abs{ X_1 - \mu } )$.
\end{corollary}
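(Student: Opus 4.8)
The plan is to realise the independent case as a degenerate instance of the strongly mixing setting of Proposition~\ref{prop:serial}, so that essentially all the work reduces to verifying the integrability condition \eqref{eq:alphaQ} and then simplifying the limiting covariance structure.

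First I would observe that an i.i.d.\ sequence is strongly mixing in a trivial way: for $n \ge 1$ the $\sigma$-fields $\mathcal{F}_0$ and $\mathcal{G}_n$ are independent, whence $\alpha_n = 0$, while $\alpha_0 \le 1/4$ is the only coefficient that may be strictly positive. Consequently $\alpha(t) = \alpha_0$ for $t \in [0,1)$ and $\alpha(t) = 0$ for $t \ge 1$, so that the inverse satisfies $\alpha^{-1}(u) = 0$ for $u \ge \alpha_0$ and $\alpha^{-1}(u) = 1$ for $0 < u < \alpha_0$. In either case $\alpha^{-1}(u) \le 1$, and therefore
\[
  \int_0^1 \alpha^{-1}(u) \, \{Q(u)\}^2 \, \diff u
  \le \int_0^1 \{Q(u)\}^2 \, \diff u
  = \expec[X_1^2] < \infty,
\]
the last equality being the standard identity obtained through the substitution $x = Q(u)$. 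Thus \eqref{eq:alphaQ} holds under the sole assumption of a finite second moment, and Proposition~\ref{prop:serial} applies directly to yield the weak convergence \eqref{eq:MAD:limit}.

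It then remains to identify the covariance matrix of the Gaussian pair $(Y, Z)$. In the i.i.d.\ case, independence annihilates every lagged term: $\cov(X_i, X_0) = 0$ and $\cov(\abs{X_i - \mu}, \abs{X_0 - \mu}) = 0$ for $i \ge 1$, so the two infinite series collapse to $\var(Y) = \var(X_0)$ and $\var(Z) = \var(\abs{X_0 - \mu})$. Likewise, in the doubly infinite series for $\cov(Y, Z)$ only the index $i = 0$ survives, giving $\cov(Y, Z) = \cov(\abs{X_0 - \mu}, X_0)$. Since covariances are invariant under centring, these three quantities are exactly the entries of the covariance matrix of the vector $(X_1 - \mu, \abs{X_1 - \mu})$, as claimed.

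There is no genuine obstacle here, the statement being a specialisation of Proposition~\ref{prop:serial}. The only point that calls for a moment's care is the behaviour of $\alpha^{-1}$ near zero, where one must confirm that the single possibly nonzero coefficient $\alpha_0$ enters only through the bounded factor $\alpha^{-1}(u) \le 1$, rather than through a quantity that could destroy the integrability in \eqref{eq:alphaQ}.
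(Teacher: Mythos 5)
Your proof is correct, but it takes a different route from the paper's. The paper proves the corollary directly: it combines the expansion \eqref{eq:MAD:YnZn} (which comes from \eqref{eq:MAD:easy} and Lemma~\ref{lem:infl} and requires only stationarity and ergodicity, both trivially satisfied by an i.i.d.\ sequence) with the classical multivariate central limit theorem applied to the pair $(Y_n, Z_n)$ in \eqref{eq:YnZn}, then invokes the continuous mapping theorem and Slutsky's lemma. You instead realise the i.i.d.\ case as a degenerate instance of Proposition~\ref{prop:serial}, which obliges you to compute $\alpha^{-1}$ and check the Doukhan--Massart--Rio condition \eqref{eq:alphaQ}; your verification is correct ($\alpha_n = 0$ for $n \ge 1$, so $\alpha^{-1} \le \ind_{(0,\alpha_0)} \le 1$ and the integral reduces to $\expec[X_1^2]$), and it matches the paper's own remark preceding the corollary that the i.i.d.\ case is ``trivially included'' in \eqref{eq:alphaQ}. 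Your collapse of the covariance series to the covariance matrix of $(X_1 - \mu, \abs{X_1 - \mu})$ is also right. What the paper's route buys is self-containment: it bypasses the mixing machinery entirely and rests only on the elementary multivariate CLT, which is arguably the natural proof when independence is assumed outright. What your route buys is economy of argument --- no new limit theorem needs to be invoked, only a condition checked --- at the cost of routing an elementary statement through a much heavier theorem. Both proofs share the essential ingredient, namely the expansion \eqref{eq:MAD:YnZn}, so the difference is confined to how the joint weak convergence of $(Y_n, Z_n)$ is established.
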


\begin{proof}
Combine \eqref{eq:YnZn} and \eqref{eq:MAD:YnZn} with the multivariate central limit theorem, the continuous mapping theorem, and Slutsky's lemma.
\end{proof}

If $\Pr[X_1 = \mu] = 0$, i.e., if $F$ is continuous at $0$, then the expansion \eqref{eq:MAD:YnZn} is the same as the one in \eqref{eq:MAD:empproc} and we obtain that $\sqrt{n} ( \MADn - \MAD )$ is asymptotically normal. In the independence case, the limit distribution coincides with the one in \eqref{eq:MAD:asnorm}. However, if $0 < \Pr[ X_1 = \mu ] < 1$, then the weak limit in \eqref{eq:MAD:limit} is not Gaussian. Moreover, the limit distribution is not centered either, its expectation being $\Pr[ X_1 = \mu ] \, \expec[\abs{Y}]$.


\subsection{Independent random sampling, infinite variance}
\label{ss:stable}

One argument in favour of the use of the mean absolute deviation for measuring dispersion is that, unlike the standard deviation, it does not require existence of second moments. Still, in the weak convergence statements \eqref{eq:MAD:asnorm} and \eqref{eq:MAD:limit}, finite second moments are presupposed. This condition is lifted in the next result.  A positive, Borel measurable function $L$ defined on a neighbourhood of infinity is \emph{slowly varying} if $\lim_{x \to \infty} L(xy) / L(x) = 1$ for all $y > 0$.




\begin{proposition}
\label{prop:asym:stable}
Let $X_1, X_2, \ldots$ be independent, identically distributed random variables with common distribution function $F$. Assume that there exist $\alpha \in (1, 2)$, $p \in [0, 1]$, and a slowly varying function $L$ on $(0, \infty)$, such that, for $x > 0$,
\begin{equation}
\label{eq:RV}
  \left.
  \begin{array}{rcl}
  \Pr[ X_1 > x ] &=& p \, x^{-\alpha} \, L(x), \\[1ex]
  \Pr[ X_1 < -x ] &=& (1-p) \, x^{-\alpha} \, L(x).
  \end{array}
  \right\}
\end{equation}
Then $F$ has a finite mean but an infinite second moment. Defining
\[
  a_n = \inf \{ x > 0 : \Pr[ \abs{X_1} > x ] \le 1/n \},
\]
we have, assuming that $F$ is continuous at $\mu$,
\[
  \frac{n}{a_n} ( \MADn - \MAD ) \dto G, \qquad n \to \infty,
\]
where $G$ is a stable distribution with characteristic function
\begin{equation}
\label{eq:stable:G}
  \int_{\reals} e^{-\mathrm{i} s x} \, \diff G(x) 
  = \exp 
  \bigl\{ 
    - \sigma^\alpha \abs{s}^\alpha \bigl( 1 - \mathrm{i} \sign(s) \tan(\alpha \pi/2) \bigr) 
  \bigr\},
  \qquad s \in \reals,
\end{equation}
the scale parameter $\sigma > 0$ being given by
\begin{equation}
\label{eq:stable:sigma}
  \sigma^\alpha = 
  \frac%
  {2^\alpha \{\Pr[ X_1 < \mu ]^\alpha \, p + \Pr[ X_1 > \mu]^\alpha \, (1-p)\}}%
  {\frac{\Gamma(2-\alpha)}{\alpha - 1} \abs{ \cos( \alpha \pi / 2 ) }}.
\end{equation}
%
\end{proposition}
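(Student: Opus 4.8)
The plan is to collapse the scaled statistic into a single sum of i.i.d.\ terms and then invoke the classical domain-of-attraction theorem for sums of regularly varying summands. Since $F$ is continuous at $\mu$ we have $\Pr[X_1 = \mu] = 0$, so the atom term in Lemma~\ref{lem:infl} vanishes. Writing $c = \Pr[X_1 < \mu] - \Pr[X_1 > \mu]$ and combining \eqref{eq:MAD:easy} with Lemma~\ref{lem:infl}, I obtain, almost surely,
\[
  \MADn - \MAD
  = c \, (\bar{X}_n - \mu)
  + \frac{1}{n} \sum_{i=1}^n \bigl( \abs{X_i - \mu} - \MAD \bigr)
  + o\bigl( \abs{\bar{X}_n - \mu} \bigr) .
\]
Setting $g(x) = c \, (x - \mu) + \abs{x - \mu}$, so that $\E[g(X_1)] = \MAD$, the first two terms combine and, after multiplication by $n/a_n$,
\[
  \frac{n}{a_n} ( \MADn - \MAD )
  = \frac{1}{a_n} \sum_{i=1}^n \bigl( g(X_i) - \MAD \bigr) + \frac{n}{a_n} \, o\bigl( \abs{\bar{X}_n - \mu} \bigr).
\]
The remainder is negligible: since $a_n^{-1} \sum_{i=1}^n (X_i - \mu) = (n/a_n)(\bar{X}_n - \mu)$ converges in distribution by the stable central limit theorem, and is therefore $O_p(1)$, while the $o(\cdot)$ above is almost surely of smaller order than $\abs{\bar{X}_n - \mu}$, the product is $o_p(1)$.

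It then suffices to find the weak limit of the i.i.d.\ sum $a_n^{-1} \sum_{i=1}^n ( g(X_i) - \MAD )$. The key structural fact is that $g$ is nonnegative: with $c = 2\Pr[X_1 < \mu] - 1$, a short case analysis gives $g(x) = 2\Pr[X_1 < \mu]\,(x - \mu) \ge 0$ for $x > \mu$ and $g(x) = -2\Pr[X_1 > \mu]\,(x - \mu) \ge 0$ for $x < \mu$, with $g(\mu) = 0$; hence $g(X_1) \ge 0$ has a bounded left tail. For the right tail I would write, for $t > 0$,
\[
  \{ g(X_1) > t \}
  = \Bigl\{ X_1 > \mu + \tfrac{t}{2\Pr[X_1 < \mu]} \Bigr\}
  \cup \Bigl\{ X_1 < \mu - \tfrac{t}{2\Pr[X_1 > \mu]} \Bigr\},
\]
a disjoint union (both probabilities in the denominators are strictly positive, as $\mu$ is the mean of a non-degenerate distribution), and evaluate each piece with the regular variation \eqref{eq:RV} and the slow variation of $L$. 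The two tails add, with no cancellation precisely because $g \ge 0$, yielding
\[
  \Pr[ g(X_1) > t ]
  \sim 2^\alpha \bigl\{ p \, \Pr[X_1 < \mu]^\alpha + (1-p) \, \Pr[X_1 > \mu]^\alpha \bigr\} \, t^{-\alpha} L(t),
  \qquad t \to \infty,
\]
so that $g(X_1)$ is regularly varying with index $\alpha \in (1,2)$ and tail constant equal to the numerator of \eqref{eq:stable:sigma}.

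With this in hand I would apply the classical stable limit theorem for i.i.d.\ regularly varying summands with finite mean. The norming $a_n$ satisfies $n \Pr[\abs{X_1} > a_n] \to 1$, hence $n a_n^{-\alpha} L(a_n) \to 1$ and $n \Pr[ g(X_1) > a_n ]$ converges to the tail constant above. Because $g(X_1)$ has a negligible left tail, the limit is totally skewed, with skewness parameter $1$; mean centering, legitimate since $\alpha > 1$, places the location at $0$, and the characteristic function takes the form \eqref{eq:stable:G}. Together with the reduction above and Slutsky's lemma, this gives $\tfrac{n}{a_n}(\MADn - \MAD) \dto G$.

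The main obstacle is the final bookkeeping that produces the scale \eqref{eq:stable:sigma}: one must match the regularly varying tail constant of $g(X_1)$ to the canonical stable tail asymptotics through the normalizing constant $C_\alpha = (1-\alpha)/\{\Gamma(2-\alpha)\cos(\alpha\pi/2)\}$, keeping careful track of signs (for $1 < \alpha < 2$ one has $\cos(\alpha\pi/2) < 0$, whence the absolute value in \eqref{eq:stable:sigma}) and of the fact that the given $a_n$ is calibrated to $\abs{X_1}$ rather than to $g(X_1)$. The regular-variation computation is otherwise routine given the explicit piecewise-linear form of $g$, and the reduction to a single sum is immediate from Lemma~\ref{lem:infl}.
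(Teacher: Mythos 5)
Your proposal is correct and follows essentially the same route as the paper: the function $g$ you introduce is exactly the paper's $\xi_i = \abs{X_i-\mu} + b(X_i-\mu)$, you compute the same one-sided regularly varying tail with the same constant $2^\alpha\{p\,\Pr[X_1<\mu]^\alpha + (1-p)\Pr[X_1>\mu]^\alpha\}$, and you dispose of the remainder and calibrate the scale against $a_n$ in the same way (the paper carries out your ``bookkeeping'' step explicitly by comparing $n\Pr[\xi_i>a_n]$ with the normalization $\Gamma(2-\alpha)\abs{\cos(\alpha\pi/2)}/(\alpha-1)$ from Samorodnitsky--Taqqu, Theorem~1.8.1). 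No substantive differences.
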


\begin{proof}
The statement about the existence of the moments is well known; see for instance the unnumbered lemma on page~578 in Section~XVII.5 in \cite{feller:1971}. Since $\Pr[ X_1 = \mu ] = 0$, equation~\eqref{eq:MAD:easy} and Lemma~\ref{lem:infl} imply
\begin{equation}
\label{eq:MAD:xi}
  \MADn - \MAD 
  = \frac{1}{n} \sum_{i=1}^n (\xi_i - \theta) + o \bigl( \abs{\bar{X}_n - \mu} \bigr),
  \qquad n \to \infty, \text{ a.s.},
\end{equation}
where, writing $b = \Pr[ X_1 < \mu ] - \Pr[ X_1 > \mu ]$,
\begin{equation}
\label{eq:xi}
  \xi_i 
  = \abs{X_i - \mu} + b \, (X_i - \mu) 
  = \abs{X_i - \mu} \, \bigl(1 + b \sign(X_i - \mu) \bigr).
\end{equation}
Note that $\MAD = \expec[\xi_i]$. Since $-1 < b < 1$, we have $\xi_i \ge 0$. Moreover, by slow variation of $L$, as $x \to \infty$,
\begin{align}
\nonumber
  \Pr[ \xi_i > x ] 
  &= \Pr[ X_i > \mu + x / (1+b) ] + \Pr[ X_i < \mu - x / (1 - b) ] \\
\label{eq:xi:L}
  &= \bigl((1+b)^\alpha \, p + (1-b)^\alpha \, (1-p) + o(1) \bigr) \, x^{-\alpha} \, L(x).
\end{align}
As $\Pr[ \abs{X_i} > x ] = x^{-\alpha} \, L(x)$, it follows that, up to a multiplicative constant, the tail function of $\xi_i$ is asymptotically equivalent to the one of $\abs{X_i}$. Observe that $1+b = 2 \, \Pr[ X_1 < \mu ]$ and $1-b = 2 \, \Pr[ X_1 > \mu ]$.

By classical theory on the domains of attraction of non-Gaussian stable distributions, equation~\eqref{eq:RV} is equivalent to the weak convergence of $n a_n^{-1} (\bar{X}_n - \mu) = a_n^{-1} \sum_{i=1}^n (X_i - \mu)$ to an $\alpha$-stable distribution \citep{gnedenko:kolmogorov:1954}. The tails of $\abs{X_i}$ and $\xi_i$ being related through~\eqref{eq:xi:L}, $a_n^{-1} \sum_{i=1}^n (\xi_i - \theta)$ converges weakly to an $\alpha$-stable distributions too. The limit distribution can be found for instance from Theorem~1.8.1 in \cite{samorodnitsky:taqqu:1994} and coincides with $G$ in the statement of the proposition; details are given below. By \eqref{eq:MAD:xi}, we have
\[
  \frac{n}{a_n} \bigl( \MADn - \MAD \bigr)
  = \frac{1}{a_n} \sum_{i=1}^n (\xi_i - \theta) + o_p(1), \qquad n \to \infty.
\]
Weak convergence of $n a_n^{-1} \bigl( \MADn - \MAD \bigr)$ to $G$ now follows from Slutsky's lemma.

The calculations leading to the expression of the scale parameter $\sigma$ in equation~\eqref{eq:stable:sigma} are as follows. Let $\tilde{a}_n$ be such that
\begin{equation}
\label{eq:tailxi:1}
  \lim_{n \to \infty} n \, \Pr[ \xi_i > \tilde{a}_n ]
  = \frac{\Gamma(2-\alpha)}{\alpha - 1} \abs{ \cos( \alpha \pi / 2 ) }.
\end{equation}
By Theorem~1.8.1 in \cite{samorodnitsky:taqqu:1994}, we have
\[
  \frac{1}{\tilde{a}_n} \sum_{i=1}^n (\xi_i - \theta) \dto \tilde{G}, \qquad n \to \infty,
\]
where $\tilde{G}$ is an $\alpha$-stable distribution whose characteristic function has the same form as in \eqref{eq:stable:G} with $\sigma$ replaced by $\tilde{\sigma} = 1$. By \eqref{eq:xi:L} and the definition of $a_n$, we have
\begin{equation}
\label{eq:tailxi:2}
  \lim_{n \to \infty} n \, \Pr[ \xi_i > a_n ]
  = (1+b)^\alpha \, p + (1-b)^\alpha \, (1-p).
\end{equation}
The function $x \mapsto \Pr[ \xi_i > x ]$ being regularly varying with index $-\alpha$, it follows that a valid choice for $\tilde{a}_n$ in \eqref{eq:tailxi:1} is $\tilde{a}_n = \gamma a_n$, where the constant $\gamma > 0$ can be read off by comparing \eqref{eq:tailxi:1} and \eqref{eq:tailxi:2}:
\begin{align*}
  \gamma^{-\alpha} \, \{ (1+b)^\alpha \, p + (1-b)^\alpha \, (1-p) \}
  &= \lim_{n \to \infty} n \, \Pr[ \xi_i > \gamma a_n ] \\
  &= \frac{\Gamma(2-\alpha)}{\alpha - 1} \abs{ \cos( \alpha \pi / 2 ) }.
\end{align*}
If $\tilde{Z}$ denotes a random variable whose distribution function is $\tilde{G}$, then
\[
  \frac{1}{a_n} \sum_{i=1}^n (\xi_i - \theta)
  = \frac{\gamma}{\tilde{a}_n} \sum_{i=1}^n (\xi_i - \theta)
  \dto \gamma \tilde{Z} = Z, \qquad n \to \infty.
\]
By \citet[Property~1.2.3]{samorodnitsky:taqqu:1994}, the characteristic function of the law of $Z$ is given by \eqref{eq:stable:G} with scale parameter $\sigma = \gamma \tilde{\sigma} = \gamma$.
\end{proof}

The regular variation condition~\eqref{eq:RV} covers distributions with power-law tails, such as the Pareto distribution, non-Gaussian stable distributions, the Student t distribution, and the Fr\'echet distribution.

\begin{remark}
Proposition~\ref{prop:asym:stable} supposes independent random sampling from a heavy-tailed distribution. Extensions to weakly dependent stationary time series are possible too. In \cite{bartkiewicz:etal:2011}, for instance, conditions are given which guarantee the weak convergence of the normalized partial sums of a weakly dependent stationary time series to a non-Gaussian stable distribution. These conditions are to be verified for the sequence $(\xi_i)_{i}$ defined in \eqref{eq:xi}. The expansion in \eqref{eq:MAD:xi} then allows to obtain the asymptotic distribution of the sample mean absolute deviation.
\end{remark}

\begin{remark}
In Proposition~\ref{prop:asym:stable}, if $F$ is not continuous at $\mu$, then $n a_n^{-1} ( \MADn - \MAD )$ can still be shown to converge weakly, but the limit law will no longer be stable. As in Corollary~\ref{cor:CLT}, it will be a non-linear functional of the bivariate stable distribution to which the joint distribution of $(X_1 - \mu, \abs{X_1 - \mu})$ is attracted \citep{rvaceva:1962}.
\end{remark}

%
%
%

\section*{Acknowledgments}

The author gratefully acknowledges funding by contract ``Projet d'Act\-ions de Re\-cher\-che Concert\'ees'' No.\ 12/17-045 of the ``Communaut\'e fran\c{c}aise de Belgique'' and by IAP research network Grant P7/06 of the Belgian government (Belgian Science Policy).

\bibliographystyle{chicago}
\bibliography{biblio}

\end{document}